\theoremstyle{plain}
\newtheorem{df}{Definition}
\newtheorem{thm}{Theorem}
\newtheorem{cor}{Corollary}
\begin{document}

\title{Loosely trapped surface and dynamically transversely trapping surface in Einstein-Maxwell system}

%%%% To generate auto affiliation numbers please use \author{}\affil{} command

%<<<<<<<<<<<<< AUTHOR  >>>>>>>>>>>>>>>%
\author{Kangjae Lee${}^1$}
\author{Tetsuya Shiromizu${}^{1,2}$}
\author{Hirotaka Yoshino${}^3$}
\author{Keisuke Izumi${}^{2,1}$}
\author{Yoshimune Tomikawa${}^4$}

%<<<<<<<<<<<<< AFFILIATION >>>>>>>>>>>>>>>%
\affil{${}^1$Department of Mathematics, Nagoya University, Nagoya 464-8602, Japan}
\affil{${}^2$Kobayashi-Maskawa Institute, Nagoya University, Nagoya 464-8602, Japan}
\affil{${}^3$Advanced Mathematical Institute, Osaka City University, Osaka 558-8585, Japan}
\affil{${}^4$Faculty of Economics, Matsuyama University, Matsuyama 790-8578, Japan}

%%% To include the collaborator name... Please use the command "\collaborator"
%%% For example: \collaborator{ATLAS Collaboration}

\begin{abstract}
We study the properties of 
the loosely trapped surface (LTS) and the dynamically 
transversely trapping surface (DTTS) in Einstein-Maxwell systems.
These concepts of surfaces were proposed
by the four of the present authors in order to
characterize strong gravity regions.
We prove the Penrose-like inequalities for the area
of LTSs/DTTSs.
Interestingly, although the naively expected upper bound for the area
is that of the photon sphere of a Reissner-Nordstr\"om black hole
with the same mass and charge, 
the obtained inequalities include corrections
represented by the energy density or pressure/tension of electromagnetic fields.
Due to this correction, the Penrose-like inequality for the area of LTSs 
is tighter than the naively expected one. We also evaluate the
correction term numerically in the Majumdar-Papapetrou two-black-hole
spacetimes.
\end{abstract}

\subjectindex{E0, E31, A13}

\maketitle

%======================================%
%<<<<<<<<<<<< SECTION I  >>>>>>>>>>>>>>%
%======================================%
%
\section{Introduction}
\label{section1}

Since a black hole creates a strong gravitational field,
there exists unstable circular orbits for photons.
For a spherically symmetric system, 
the collection of them makes a surface called a photon sphere. 
In the Schwarzschild spacetime, for example, the photon sphere exists at 
the surface $r=3m$, where $m$ is the Arnowitt-Deser-Misner (ADM) mass.
Furthermore, a generalized concept 
of a photon sphere,
which is called a photon surface, has been proposed \cite{Claudel:2000}. 
However, the definition of the photon surface 
requires a highly symmetric spacetime (e.g., \cite{Yoshino:2016}). 
Moreover, the existence of photon surfaces does not necessarily mean
that the gravitational field is strong there \cite{Claudel:2000}.
 
A photon sphere is directly related to observational phenomena. 
The quasinormal modes of black holes are basically determined
by the properties of photon spheres \cite{cardoso2009}. 
The black hole shadow, whose direct picture has been taken
by the recent radio observations of
the Event Horizon Telescope Collaboration \cite{akiyama},
is also determined by a photon sphere \cite{Virbhadra:1999}.

Motivated by the recent observations, the four of the present authors proposed 
concepts that characterizes strong gravity regions; a loosely trapped surface (LTS) 
\cite{shiromizu2017}, a transversely trapping surface (TTS) \cite{Yoshino:2017}, 
and a dynamically transversely trapping surface (DTTS) 
\cite{Yoshino:2020-1, Yoshino:2020-2}
(see also Ref. \cite{galtsov} for an extension of a TTS). 
%They would be good indicators for strong gravity region 
%(See Sect. II and III for the definitions). 
For certain cases, we have proved inequalities analogous to the Penrose inequality~\cite{penrose1973}, 
that is, their areas are equal to or less than $4\pi (3m)^2$,
where $m$ is the ADM mass  
(see also Ref.~\cite{hod} for an earlier work for a photon sphere). 
The upper bound is realized for a photon sphere in
the Schwarzschild black hole, where 
$3m$ comes from the areal radius of unstable circular photon orbits. 
For an LTS, the application is restricted to a
spacelike hypersurfaces with a positive Ricci scalar. 
This restriction is natural because it is guaranteed by
the positivity of the energy density for maximally sliced initial data. 
For a DTTS, on the other hand, one requires
the non-positivity of the radial pressure on a DTTS in addition to
the positivity of the Ricci scalar. This requirement
of the non-positive pressure does not significantly
restrict the situation because 
the vacuum cases do work. However,  
it remains a little mystery why the non-positivity of
the radial pressure is required.

Therefore, in this paper, we shall discuss non-vacuum cases. As a first typical example, we will 
focus on Einstein-Maxwell systems. 
We adopt Jang's work \cite{jang1979} to show the inequality, 
that is, we will employ the method of 
the inverse mean curvature flow \cite{geroch, wald1977}
\footnote{The resulted inequality for the minimal surface in Ref.~\cite{jang1979} has a lower bound. 
  However, it is known that the lower bound is violated
  for multi-black-hole systems \cite{weinstein2004}. 
  The proof for multi-black-hole systems in the
  Einstein-Maxwell theory is given in Refs.~\cite{khuri2013,khuri2017}.}. 
The upper bound is expected to be given by
the area of the outermost photon sphere, i.e., 
the locus of the unstable circular orbits of photons 
in a spherically symmetric charged black hole spacetime, namely,
a Reissner-Nordstr\"om spacetime with the same mass and charge.
As a consequence, 
however, we see that the obtained inequalities depend on
(the part of) the energy density and the pressure/tension
of the electromagnetic 
fields which give corrections to the naively expected upper bound.
This is impressive because the Penrose inequality
for apparent horizons does not depend on such quantities.

The rest of this paper is organized as follows.
In Sect.~\ref{sec:EM-theory}, we will briefly describe
the Maxwell theory in a curved spacetime.
Some of the notations will be explained together. 
In Sect. \ref{sec:lts}, we will present the definition of the LTS  
and prove the Penrose-like inequality in the Einstein-Maxwell system. 
Then, in Sect. \ref{Sec:DTTS}, we will present the definition of the DTTS 
and prove the Penrose-like inequality in the Einstein-Maxwell system. 
In Sect.~\ref{Sec:numerical}, we will revisit the problem of DTTSs
in Majumdar-Papapetrou two-black-hole spacetimes, which
was studied in our previous paper \cite{Yoshino:2020-2},
from the viewpoint of the current study.  
We will examine the properties of the correction term
of the Penrose-like inequality through numerical calculations. 
The last section will give a summary and discussions.
In Appendix, we will shortly discuss 
the case of a TTS defined for static/stationary spacetimes. 
Note that we use following units in which %the geometrized unit system that is 
the speed of light $c=1$, the Newtonian constant of gravitation $G=1$ 
and the Coulomb constant $1/(4\pi \varepsilon_0) =1$, where
$\varepsilon_0$ is the permittivity of vacuum.

%
%======================================%
%<<<<<<<<<<<< SECTION II  >>>>>>>>>>>>>>%
%======================================%
%
\section{Einstein-Maxwell theory and setup}
\label{sec:EM-theory}

In this paper, 
we consider an asymptotically flat spacelike hypersurface $\Sigma$
in a four-dimensional spacetime with a metric $g_{ab}$. 
We suppose $n^a$ to be the future-directed unit normal to $\Sigma$
and the induced metric of $\Sigma$ is given by $\gamma_{ab}=g_{ab}+n_an_b$.
In $\Sigma$, we consider a two-dimensional closed surface,
an LTS (denoted by $S_0$) or a DTTS
(denoted by $\sigma_0$), with the induced metric
$h_{ab}$. The outward unit normal to that surface
is $r_a$ (and therefore, $\gamma_{ab}=h_{ab}+r_ar_b$).

We assume the presence of electromagnetic fields on $\Sigma$. 
The electromagnetic fields are specified by the anti-symmetric
tensor, $F_{ab}$, and its Hodge dual, ${}^{*}F_{ab}=(1/2)\epsilon_{abcd}F^{cd}$,
where $\epsilon_{abcd}$ is the Levi-Civita symbol
in a four-dimensional spacetime. The tensors $F_{ab}$
and ${}^{*}F_{ab}$ follow
Maxwell's equations,
\begin{equation}
\nabla_a{}^{*}F^{ab}=0, \qquad \nabla_aF^{ab}=-4\pi j^b,
\end{equation}
where $j^b$ is the
four-current vector.
The electric and magnetic fields, $E^a$
and $B^a$, are defined by
\begin{equation}
E_a:=F_{ab}n^b, \qquad B_a:=-{}^{*}F_{ab}n^b,
\end{equation}
respectively. 
Obviously, these fields are tangent to $\Sigma$ because $E_an^a=B_an^a=0$
holds. The electric charge density $\rho_{\rm e}$
and the electric current $(J_{\rm e})^a$ are
defined by
\begin{equation}
\rho_{\rm e}:= -j_an^a, \qquad (J_{\rm e})^a := {\gamma^a}_bj^b.
\end{equation}
In our paper, we require the charge density to vanish
(outside an LTS or a DTTS), i.e. 
$\rho_{\rm e}=0$, but do not necessarily require the
electric current $(J_{\rm e})^a$ to be zero.

The electric and magnetic fields satisfy Gauss' laws, $D_aE^a=4\pi \rho_{\rm e}$ and $D_aB^a=0$, 
where $D_a$ is the covariant derivative with respect to $\gamma_{ab}$. 
The total electric and magnetic charges are
\begin{equation}
4\pi q_{\rm e} :=\int_{S_\infty} E^ar_adA, \qquad 4\pi q_{\rm m} :=\int_{S_{\infty}}B^ar_adA,
\end{equation}
where $S_{\infty}$ is a sphere at spacelike infinity and $r_a$
is the outward unit normal to $S_{\infty}$.
It would be important to point out that $q_{\rm e}$ can have a nonzero value
even if the electric charge density is zero throughout the spacetime,
as one can understand by imagining
a spacelike hypersurface with an Einstein-Rosen bridge
and two asymptotically flat regions in 
the maximally extended Reissner-Nordstr\"om spacetime.
Similarly, although we assume the absence of magnetic
monopoles throughout the paper, the value of $q_{\rm m}$ 
can be nonzero. In a spherically symmetric spacetime, 
if both electric and magnetic fields
are present, the total squared charge defined by
\begin{equation}
q^2 := q_{\rm e}^2+q_{\rm m}^2
\end{equation}
appears in the spacetime metric of the Reissner-Nordstr\"om solution.
We handle the magnetic charge $q_{\rm m}$
in the following way. In Sects.~\ref{sec:lts} and \ref{Sec:DTTS},
we will assume $q_{\rm m}=0$ (and therefore, $q^2=q_{\rm e}^2$),
and derive the Penrose-like inequalities.
In the final section, we will discuss the modifications to
those inequalities when $q_{\rm m}$ is nonzero.

The energy-momentum tensor for the electromagnetic fields
is given by
\begin{equation}
  T_{ab}^{\rm (em)}
  = \frac{1}{4\pi}\left({F_a}^{c}F_{bc}-\frac14 g_{ab}F_{cd}F^{cd}\right).
  \label{EM-energy-momentum-tensor}
\end{equation}
In addition to $T_{ab}^{\rm (em)}$, we consider the presence of
ordinary matters whose energy-momentum tensor is $T_{ab}^{(m)}$. 
The total energy-momentum tensor is given by
$T_{ab}=T_{ab}^{\rm (em)}+T_{ab}^{(m)}$.
The relations of particular importance in this paper are
the energy density,
\begin{equation}
  8\pi \rho:=8\pi T_{ab}n^an^b =
  E^aE_a+B^aB_a +8\pi  \rho^{(m)},
  \label{energy-density-general}
\end{equation}
where $\rho^{(m)}:=T^{(m)}_{ab}n^an^b$, and the radial pressure,
\begin{equation}
  8\pi P_r:=8\pi T_{ab}r^ar^b =
  (E_aE_b+B_aB_b)h^{ab}-(E_ar^a)^2-(B_ar^a)^2+8\pi P_r^{(m)},
  \label{radial-pressure-general}
\end{equation}
where $P_r^{(m)}:=T^{(m)}_{ab}r^ar^b$.

%
%======================================%
%<<<<<<<<<<<< SECTION II  >>>>>>>>>>>>>>%
%======================================%
%
\section{Loosely trapped surface in Einstein-Maxwell system}
\label{sec:lts}

In this section, we review the definition of an LTS
following Ref.~\cite{shiromizu2017},
and show the Penrose-like inequality for it 
in Einstein-Maxwell systems.

%
%======================================%
%<<<<<<<<<<<< SUBSECTION A  >>>>>>>>>>>>>>%
%======================================%
%

\subsection{Definition of an LTS}

The definition of an LTS is motivated by the following observation.
As an example, we consider a Reissner-Nordstr\"om spacetime. 
The metric is 
\begin{eqnarray}
ds^2=-f_{\rm RN}(r)dt^2+f^{-1}_{\rm RN}(r)dr^2+r^2 d\Omega_2^2,
\end{eqnarray}
where $f_{\rm RN}(r) :=1-2m/r+q^2/r^2$, and $m$ and $q$ are the ADM mass and
total charge, respectively. 
$d\Omega_2^2$ is the two-dimensional metric of the unit round sphere. 
From the behavior of a null geodesic, 
one can find unstable circular orbits of photons at 
\begin{eqnarray}
r=r_p:=\frac{3m+{\sqrt {9m^2-8q^2}}}{2}, \label{rp}
\end{eqnarray}
where we suppose $9m^2 \geq 8q^2$. Note that a photon sphere exists
even if the spacetime possesses a naked singularity at the center
for $(9/8)m^2\ge q^2>m^2$.

Now we define a similar concept to a photon sphere 
for general setups in terms of geometry. Here, we recall the fact that 
an apparent horizon is the minimal surface on  time-symmetric initial data. 
Therefore, one possibility to specify a strong gravity region
is to employ the mean curvature, that is,
the trace of the extrinsic curvature of two-dimensional surfaces. 
Therefore, we look at the mean curvature
for the Reissner-Nordstr\"om spacetime. 
It is easy to see that the mean curvature of an $r=$ constant surface on $t=$ constant hypersurface 
is given by
\begin{eqnarray}
k=\frac{2}{r}f^{1/2}_{\rm RN}(r).
\end{eqnarray}
From the first derivative of $k$ with respect to $r$, 
\begin{eqnarray}
  \frac{dk}{dr}=-\frac{2}{r^2}\left(1-\frac{3m}{r}
  +\frac{2q^2}{r^2}\right)f^{-1/2}_{\rm RN}(r),
\end{eqnarray}
we find that the maximum value of $k$ exists at $r=r_p$. This is 
exactly the same location with that of unstable circular orbits of photons. 
In the region between the event horizon and the photon sphere at $r=r_p$,
the mean curvature satisfies $k \geq 0$ and $dk/dr \geq 0$. 

From the above argument, one may adopt
the following definition of an LTS \cite{shiromizu2017}.

\begin{df} \label{def-LTS}
A loosely trapped surface (LTS), $S_0$, is defined as 
a compact two-surface in a spacelike hypersurface $\Sigma$, 
and has the mean curvature $k$ for the outward spacelike normal vector
such that $k|_{S_0}>0$ and $k'|_{S_0}\ge 0$, 
where $'$ is the derivative along the outward spacelike normal vector.
\end{df}

%
%======================================%
%<<<<<<<<<<<< SUBSECTION B  >>>>>>>>>>>>>>%
%======================================%
%

\subsection{Penrose-like inequality for an LTS}
\label{Sec:PenroseLike-inequality-LTS}

In this section, we present the inequality for the area of an LTS in
Einstein-Maxwell systems. Our theorem is as follows:

\begin{thm} \label{thm-LTS}
Let $\Sigma$ be an asymptotically flat spacelike hypersurface 
with the Ricci scalar ${}^{(3)}R \geq 2\left(E^aE_a+B^aB_a\right)+16\pi\rho^{(m)}$,
where $\rho^{(m)}$ is the non-negative energy density for other
matters.\footnote{From Eq.~\eqref{energy-density-general} and
  the Hamiltonian constraint of the Einstein equations,
  this condition is equivalent to $K_{ab}K^{ab}\ge K^2$, where $K_{ab}$
  is the extrinsic curvature of $\Sigma$. This condition is obviously
  satisfied by maximally sliced hypersurfaces, on which $K=0$ holds.}
We assume that $\Sigma$ is foliated by the inverse mean curvature flow, 
and a slice of the foliation parameterized by $y$, $S_y$, has topology $S^2$. 
We also suppose the electric charge density $\rho_{\rm e}$ to vanish
outside the LTS, $S_0$.
Then, the areal radius $r_0=(A_0/4\pi)^{1/2}$ of the LTS, $S_0$,
in $\Sigma$ satisfies the inequality 
\begin{eqnarray}
m \geq \frac{1}{3}\Bigl(1+\Phi_0^+ \Bigr)r_0+\frac{2q^2}{3r_0}, \label{ltsineq}
\end{eqnarray}
where $m$ is the ADM mass and $q$ is the total charge. $\Phi_0^+$ is defined by 
\begin{eqnarray}
  \Phi^+_0:= \frac{1}{8\pi} \int_{S_0}[(B_a r^a)^2+(E_aE_b+B_aB_b)h^{ab}]dA,
  \label{Phi-plus-0}
\end{eqnarray}
where $h_{ab}$ is the induced metric of $S_0$ 
and $ r^a$ is the outward-directed unit normal vector to $S_0$ in $\Sigma$. 
\end{thm}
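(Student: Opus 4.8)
\emph{Plan.} The plan is to run the inverse mean curvature flow (IMCF) outward from $S_0$, to monitor Jang's charged refinement of the Geroch quasi-local mass along it, and to supplement this with a pointwise inequality that the LTS conditions force on $S_0$. Parametrize the flow (whose existence and $S^2$ leaves are assumed) so that $A(y)=A_0 e^{y}$, hence the areal radius obeys $r'(y)=r(y)/2$, and track
\begin{equation*}
  \mathcal{M}(y):=\frac{r(y)}{2}\left(1-\frac{1}{16\pi}\int_{S_y}k^2\,dA\right)+\frac{q^2}{2r(y)} ,
\end{equation*}
where $k$ is the mean curvature of $S_y$ in $\Sigma$ and $q^2=q_{\rm e}^2$ because $q_{\rm m}=0$ here. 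Since $k|_{S_0}>0$, the IMCF starts smoothly from $S_0$; since $\Sigma$ is asymptotically flat, $\mathcal{M}(y)\to m$ as $S_y\to S_\infty$ (the $q^2/2r$ term drops in the limit and the rest tends to the ADM mass). It therefore suffices to show $\mathcal{M}$ is non-decreasing and to bound $\int_{S_0}k^2\,dA$ from above.

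\emph{Monotonicity.} For $S^2$ leaves the Geroch computation gives $dm_{\rm H}/dy=(r/32\pi)\int_{S_y}({}^{(3)}R+\mathcal{N})\,dA$, where $m_{\rm H}$ is the $q=0$ part of $\mathcal{M}$ and $\mathcal{N}\ge0$ collects the standard non-negative Geroch terms built from $S_y$; adding the charge term, $d\mathcal{M}/dy=(r/32\pi)\int_{S_y}({}^{(3)}R+\mathcal{N})\,dA-q^2/4r$. The hypothesis gives ${}^{(3)}R\ge2(E^aE_a+B^aB_a)+16\pi\rho^{(m)}\ge2(E_ar^a)^2$, while $D_aE^a=4\pi\rho_{\rm e}=0$ outside $S_0$ keeps the electric flux $\oint_{S_y}E_ar^a\,dA$ equal to $4\pi q$ through every leaf, so Cauchy--Schwarz yields $\int_{S_y}(E_ar^a)^2\,dA\ge4\pi q^2/r(y)^2$. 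Hence the $2(E_ar^a)^2$ part of ${}^{(3)}R$ contributes at least $q^2/4r$ to $dm_{\rm H}/dy$, which absorbs $-q^2/4r$ and leaves $d\mathcal{M}/dy\ge(r/32\pi)\int_{S_y}\bigl(\mathcal{N}+2(B_ar^a)^2+2(E_aE_b+B_aB_b)h^{ab}+16\pi\rho^{(m)}\bigr)dA\ge0$. Therefore $\mathcal{M}(S_0)\le m$.

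\emph{Estimate on $S_0$ and conclusion.} From the first-variation formula for the mean curvature under the outward normal deformation together with the Gauss equation $2K_{\rm Gauss}={}^{(3)}R-2\,{}^{(3)}R_{ab}r^ar^b+k^2-|\chi|^2$ for $S_0\subset\Sigma$ ($\chi_{ab}$ and $K_{\rm Gauss}$ the extrinsic and Gauss curvatures of $S_0$), the LTS condition $k'|_{S_0}\ge0$ forces $K_{\rm Gauss}\ge\frac12k^2+\frac12|\chi|^2+\frac12\,{}^{(3)}R$ on $S_0$; integrating, using $|\chi|^2\ge\frac12k^2$, and invoking Gauss--Bonnet $\int_{S_0}K_{\rm Gauss}\,dA=4\pi$ gives $\frac34\int_{S_0}k^2\,dA+\frac12\int_{S_0}{}^{(3)}R\,dA\le4\pi$. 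Inserting ${}^{(3)}R\ge2(E^aE_a+B^aB_a)+16\pi\rho^{(m)}$, decomposing $E^aE_a+B^aB_a=(E_ar^a)^2+(B_ar^a)^2+(E_aE_b+B_aB_b)h^{ab}$, using $\int_{S_0}(E_ar^a)^2\,dA\ge4\pi q^2/r_0^2$, keeping $\int_{S_0}[(B_ar^a)^2+(E_aE_b+B_aB_b)h^{ab}]\,dA=8\pi\Phi_0^+$ [cf. \eqref{Phi-plus-0}], and discarding the non-negative $\rho^{(m)}$ term, one gets $\int_{S_0}k^2\,dA\le\frac{16\pi}{3}\bigl(1-q^2/r_0^2-2\Phi_0^+\bigr)$. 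Substituting into $\mathcal{M}(S_0)\le m$,
\begin{equation*}
  m\ge\frac{r_0}{2}\left(1-\frac{1}{16\pi}\int_{S_0}k^2\,dA\right)+\frac{q^2}{2r_0}\ge\frac{r_0}{2}\left(\frac23+\frac{q^2}{3r_0^2}+\frac23\Phi_0^+\right)+\frac{q^2}{2r_0}=\frac13\bigl(1+\Phi_0^+\bigr)r_0+\frac{2q^2}{3r_0},
\end{equation*}
which is \eqref{ltsineq}. (Consistency check: for Reissner--Nordstr\"om with $S_0$ at $r=r_p$ one has $B=0$ and $E$ radial, so $\Phi_0^+=0$, $\mathcal{M}$ is constant along the $r=$const flow, and \eqref{ltsineq} is saturated.)

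\emph{Main obstacle.} I expect the monotonicity step to require the most care: one must verify that the electromagnetic (and $\rho^{(m)}$) contribution to ${}^{(3)}R$, entering $d\mathcal{M}/dy$ through the Hamiltonian constraint and the Geroch identity, is exactly enough to cancel the $-q^2/4r$ from the charge term in $\mathcal{M}$ while leaving a manifestly non-negative remainder, and that the flux through each leaf is genuinely $y$-independent---which uses only $\rho_{\rm e}=0$, not the vanishing of the current $(J_{\rm e})^a$. The estimate on $S_0$ is then routine once the mean-curvature variation and Gauss equations are written down; its only novel feature, compared with the minimal-surface (apparent-horizon) case, is that the positive electromagnetic surplus $\Phi_0^+$ on $S_0$ is retained rather than discarded, which is precisely what makes \eqref{ltsineq} tighter than the naively expected Reissner--Nordstr\"om photon-sphere bound $m\ge r_0/3+2q^2/(3r_0)$.
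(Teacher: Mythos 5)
Your proposal is correct and follows essentially the same route as the paper: the bound $\int_{S_0}k^2\,dA\le\frac{16\pi}{3}\bigl(1-q^2/r_0^2-2\Phi_0^+\bigr)$ from the LTS condition via the mean-curvature variation formula, Gauss--Bonnet, Cauchy--Schwarz and Gauss' law, combined with Jang-type monotonicity of the charged Geroch energy along the inverse mean curvature flow (you fold the $q^2/2r$ term into the monitored quantity $\mathcal{M}$, while the paper keeps it on the right-hand side of $dE/dy$ and integrates, which is the same computation). The only cosmetic imprecision is stating the inequality $K_{\rm Gauss}\ge\frac12k^2+\frac12|\chi|^2+\frac12\,{}^{(3)}R$ pointwise while silently dropping the lapse term $\varphi^{-1}\mathcal{D}^2\varphi$, which is legitimate only after integration over $S_0$ (where it contributes $\int\varphi^{-2}|\mathcal{D}\varphi|^2\,dA\ge0$), exactly as the paper does.
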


\begin{proof}
On $\Sigma$, the derivative of the mean curvature $k$ along $r^a$ is given by
\begin{equation}
r^a D_a k \ = \ -\varphi^{-1}{\cal D}^2 \varphi -\frac{1}{2}{}^{(3)}R+\frac{1}{2}{}^{(2)}R  -\frac{1}{2}(k^2+k_{ab}k^{ab}), \label{deriv.mean.lts}
\end{equation}
where $D_a$ is the covariant derivative of $\Sigma$, 
${\cal D}_a$ is the covariant derivative of $S_0$, 
${}^{(2)}R$ is the Ricci scalar of $S_0$, 
${}^{(3)}R$ is the Ricci scalar of $\Sigma$,   
$k_{ab}$ is the extrinsic curvature of $S_0$ and
$\varphi$ is the lapse function for $y$, that is, $r_a=\varphi D_a y$. 
Then, the integration of Eq.~\eqref{deriv.mean.lts} over $S_0$ gives us 
\begin{eqnarray}
\frac{1}{2}\int_{S_0}{}^{(2)}RdA
&=&
\int_{S_0} \left[r^aD_ak+\varphi^{-2}\left(\cal{D}\varphi\right)^2
  +\frac{1}{2}{}^{(3)}R+\frac{1}{2}\tilde k_{ab}\tilde k^{ab}+\frac{3}{4}k^2 \right]
\nonumber\\
&\ge&
\frac{3}{4}\int_{S_0}\left[k^2+\frac{4}{3}(E^aE_a+B^aB_a)\right]dA
\nonumber\\
&=&
\frac{3}{4}\int_{S_0}\left[k^2+\frac{4}{3}(E_ar^a)^2\right]dA +8\pi \Phi^+_0,
\label{calculation-2DRicci-LTS}
\end{eqnarray}  
where $\tilde k_{ab}:=k_{ab}-(1/2)h_{ab}k$. Note that $\Phi^+_0 \geq 0$. 
Using the Gauss-Bonnet theorem and 
%the present setup tell us $\int_{S_0}{}^{(2)}RdA=8\pi$, 
Cauchy-Schwarz inequality,  
%$\left(\int_{S_0}E^aE_adA\right)\left(\int_{S_0}dA\right)\ge\left(\int_{S_0}E_ar^a dA\right)^2$, 
we can derive the following inequality for the mean curvature
\begin{eqnarray}
\int_{S_0}k^2dA\le\frac{16\pi}{3}-\frac{4}{3}\frac{(4\pi q)^2}{A_0}-\frac{32}{3}\pi\Phi^+_0, \label{k-ineq}
\end{eqnarray}
where we used Gauss' law for the electric field, $\int_{S_0} E_a r^adA=\int_{S_\infty} E_a r^adA=4\pi q$. Here, $S_\infty$ denotes the two-sphere at
spacelike infinity.

Let us consider Geroch's quasilocal energy \cite{geroch, wald1977, jang1979}
\begin{eqnarray}
E(y):=\frac{A^{1/2}(y)}{64\pi^{3/2}}\int_{S_y}(2{}^{(2)}R-k^2)dA, 
\end{eqnarray}
where $A(y)$ is the area of $S_y$.
Here, we suppose that the surfaces $y=0$ and $y=\infty$
correspond to the LTS and a sphere at spacelike infinity, respectively. 
Under the inverse mean curvature flow
generated by the condition $k \varphi=1$,
the first derivative of $E(y)$ is computed as 
\begin{equation}
\frac{dE(y)}{dy} \ = \ \frac{A^{1/2}(y)}{64\pi^{3/2}}
\int_{S_y}\Bigl[ 2 \varphi^{-2}({\cal D} \varphi)^2
  +{}^{(3)}R+\tilde k_{ab}\tilde k^{ab} \Bigr] dA. 
\label{derivative-Geroch}
\end{equation}
Using ${}^{(3)}R \geq 2(E^aE_a+B^aB_a)$, we can derive  
\begin{eqnarray}
\frac{dE(y)}{dy} \geq \frac{A^{1/2}(y)}{32\pi^{3/2}}\left[\frac{(4\pi q)^2}{A(y)}
+8\pi \Phi^+_y\right], \label{monotonic}
\end{eqnarray} 
with the same procedure as the derivation
of the inequality of Eq.~\eqref{k-ineq},
where $8\pi \Phi_y^+=\int_{S_y}[(B_a r^a)^2+(E_aE_b+B_aB_b)h^{ab}]dA$. 

The integration of the inequality of Eq.~\eqref{monotonic} over $y$
in the range $0\le y<\infty$ implies us 
\begin{eqnarray}
  m & \geq &
  \frac{A_0^{1/2}}{4\pi^{1/2}}\left(1-\frac{1}{16\pi}\int_{S_0}k^2dA\right)
  +\frac{\pi^{1/2}q^2}{A_0^{1/2}}
  +\frac{1}{4\pi^{1/2}}\int^\infty_0 \Phi^+_yA^{1/2}(y)dy
  \nonumber \\
  & \geq &
  \frac{A_0^{1/2}}{6\pi^{1/2}}+\frac{4}{3}\frac{\pi^{1/2}q^2}{A_0^{1/2}}
+\frac{A_0^{1/2}}{6\pi^{1/2}} \Phi^+_0. 
\end{eqnarray}
In the above, we have used the well-known relation $A(y)=A_0\exp(y)$
that holds in the inverse mean curvature flow at the first step, and 
the non-negativity of $\Phi_y^+$ and
the inequality of Eq.~\eqref{k-ineq} in the second step.
Then, we find the inequality of Eq.~\eqref{ltsineq}. \\
\end{proof}

There are four remarks. First, the minimum value of the right-hand side of 
the inequality of Eq.~\eqref{ltsineq} implies  
\begin{eqnarray}
m \geq \frac{2{\sqrt {2}}}{3}{\sqrt {1+\Phi_0^+}}\ |q|. \label{mboundlts}
\end{eqnarray}
Setting $\Phi_0^+=0$, this inequality
is reduced to $m \geq \frac{2{\sqrt {2}}}{3}|q|$ which
corresponds to the condition for the existence of a photon sphere
in the Reissner-Nordstr\"om 
solution [see Eq.~\eqref{rp}]. 

Next, under the condition given by Eq.~\eqref{mboundlts},
the rearrangement of the inequality of Eq.~\eqref{ltsineq} 
gives us 
\begin{eqnarray}
4 \pi r_{{\rm LTS}-}^2 \leq  A_0 \leq 4 \pi r_{{\rm LTS}+}^2,  \label{cplilts}
\end{eqnarray}
where
\begin{eqnarray}
r_{{\rm LTS}\pm}:=\frac{3m \pm \sqrt{9m^2-8\left(1+\Phi^+_0\right)q^2}}{2\left(1+\Phi^+_0 \right)}.
\end{eqnarray}
This inequality must be interpreted carefully in the sense that
the lower bound would not hold in general.
In the case of the ordinary Riemannian Penrose inequality \cite{weinstein2004},
it has been pointed out that 
the lower bound is expected to be incorrect for multi-black-hole systems.
We consider this also may be the case for an LTS with multiple components.
This is not a contradiction to our proof since
an LTS with multiple components is out of the application of our theorem.
Therefore, in general,
we would have only the upper bound for the area,
\begin{eqnarray}
A_0 \leq 4 \pi r_{{\rm LTS}+}^2. 
\end{eqnarray}
Note that if we restrict out attention on an LTS with a single component,
the lower bound must hold true. The physical reason is as follows.
Let us consider a Reissner-Nordstr\"om spacetime
in the parameter region $(9/8)m^2> q^2>m^2$
which possesses both a naked singularity and two photon spheres.
The singularity of a Reissner-Nordstr\"om spacetime is known to be repulsive.
This is because 
the energy of electromagnetic
fields outside a small sphere near the singularity (e.g., in the sense of
the Komar integral) exceeds the
ADM mass $m$, and hence,
the gravitational field is generated by negative energy in that region.
The repulsive gravitational field is, of course,
not strong. 
This is the reason why an LTS with a single component
cannot exist in the vicinity
of a naked singularity with an electric charge and 
a lower bound exists for its area 
in the Einstein-Maxwell theory. 

%In any case, note that $\Phi_0^+$ depends on $S_0$. So the inequality (\ref{cplilts}) 
%is formal in this sense, too. 

The third remark is that there is no contribution from $\Phi_0^+$ in the 
Riemannian Penrose inequality for the Einstein-Maxwell system \cite{jang1979}, 
%The author took a time-symmetric initial data. 
whereas in our theorem
%in the proof of the present inequality, the contribution from $\Phi_0^+$ 
it appears.
% in Eq. (\ref{k-ineq}). 
This is because the Riemannian Penrose inequality discusses 
the minimal surface with $k=0$, for which 
the inequality of Eq.~\eqref{k-ineq} is unnecessary.

%Since the apparent horizon on time-symmetric initial data is located at the minimal surface, 
%Eq. (\ref{k-ineq}) does not appear in the proof of the Penrose inequality. 

Finally, the presence of 
$\Phi_0^+$ makes the inequality tighter 
than the case of $\Phi^+_0=0$.
From Eq.~\eqref{deriv.mean.lts}, one can 
see that the quantity $\Phi^+_0$ 
appears from (the part of) the energy density
in the Hamiltonian constraint. 
The electromagnetic energy density  
increases if $\Phi_0^+$ is turned on,
and through the relation given by Eq.~\eqref{deriv.mean.lts},
the positive energy density tends to make the
formation of an LTS more difficult. This means that
the area of 
$S_0$ will become smaller.

%
%======================================%
%<<<<<<<<<<<< SECTION III  >>>>>>>>>>>>>>%
%======================================%
%

\section{Dynamically transversely trapping surface in Einstein-Maxwell system}
\label{Sec:DTTS}

In this section, we first explain the observation that
motivates the definition of a DTTS, and introduce 
the definition of
a DTTS. 
Then we prove an inequality for its area in Einstein-Maxwell systems.

%
%======================================%
%<<<<<<<<<<<< SUBSECTION A  >>>>>>>>>>>>>>%
%======================================%
%

\subsection{Definition of a DTTS}

The concept of a DTTS is inspired by the induced geometry of
a photon surface in a spherically symmetric spacetime.  
The photon surface is defined as a timelike hypersurface
$S$ such that any photon emitted tangentially to $S$ at an arbitrary 
point of $S$ remains in $S$ forever \cite{Claudel:2000}.
Let us consider spherically symmetric spacetimes with the metric,
\begin{eqnarray}
ds^2=-f(r)dt^2+f^{-1}(r)dr^2+r^2d\Omega_2^2.
\end{eqnarray}
Solving the null geodesic equations, 
we can find that $S$ satisfies
$dr/dt=\pm f (1-b^2f/r^2)^{1/2}$, where  $b$ is the impact parameter.
The induced metric of the photon surface $S$ is obtained as 
\begin{eqnarray}
ds^2|_S=-\alpha ^2 (r) dt^2+r^2d \Omega_2^2,
\end{eqnarray}
where $\alpha(r) :=bf(r)/r$. 
The mean curvature of $t=$ constant surface $\sigma_t$ in $S$ is given by 
\begin{eqnarray}
\bar k=\frac{2}{bf(r)}\frac{dr}{dt},
\end{eqnarray}
and the Lie derivative along $\bar n^a$ for $\bar k$ 
is computed as 
\begin{eqnarray}
  {}^{(3)}\bar{\mbox \pounds}_{\bar{n}}\bar{k}
  =
  \frac{2}{r^2}\left( f-\dfrac{1}{2}rf^\prime \right) ,
\end{eqnarray}
where $\bar n^a$ is the future-directed unit normal to $\sigma_t$ in $S$. 
For the Reissner-Nordstr$\rm{\ddot{o}}$m black hole, it becomes 
\begin{eqnarray}
{}^{(3)}\bar{\mbox \pounds}_{\bar{n}}\bar{k}=\frac{2}{r^2}\Bigl(1-\frac{3m}{r}+\frac{2q^2}{r^2} \Bigr). 
\end{eqnarray}
Thus, ${}^{(3)}\bar{\mbox \pounds}_{\bar{n}}\bar{k}$ is
negative and positive in the inside and outside regions of the photon sphere,
respectively. 
Hence, the non-positivity of
${}^{(3)}\bar{\mbox \pounds}_{\bar{n}}\bar{k}$
is expected to indicate the strong gravity.

We now give the definition of a DTTS \cite{Yoshino:2020-1}. 

\begin{df} \label{def-DTTS}
A closed orientable two-dimensional surface $\sigma_0$ in a smooth spacelike hypersurface $\Sigma$ 
is a dynamically transversely trapping surface (DTTS) if and only if there exists a timelike hypersurface $S$ that intersects 
$\Sigma$ at $\sigma_0$ and satisfies $\bar{k}=0$,
$\mathrm{max}(\bar{K}_{ab}k^ak^b)=0$, and
${}^{(3)}\bar{\mbox \pounds}_{\bar{n}}\bar{k}\le 0$ 
at every point in $\sigma_0$, where $\bar{k}$ is the mean curvature
of $\sigma_0$ in $S$, 
$\bar{K}_{ab}$ is the extrinsic curvature of $S$,
$\bar n^a$ is the unit normal vector of $\sigma_0$ in $S$, 
${}^{(3)} \bar{\mbox \pounds}_{\bar{n}}$ is the Lie derivative associated
with $S$ and $k^a$ is arbitrary future-directed null vectors tangent to $S$. 
\end{df}

Since the emitted photons do not form a photon sphere
in general without spherical symmetry, here we emit photons
in the transverse direction (to satisfy the condition $\bar{k}=0$),
and adopt the location of the outermost photons as the surface $S$
[the condition $\mathrm{max}(\bar{K}_{ab}k^ak^b)=0$].
Then, we judge that the surface $\sigma_0$ exists in a strong
gravity region if the condition
${}^{(3)}\bar{\mbox \pounds}_{\bar{n}}\bar{k}\le 0$
is satisfied. See our previous papers \cite{Yoshino:2020-1, Yoshino:2020-2}
for more details.

%
%======================================%
%<<<<<<<<<<<< SUBSECTION B  >>>>>>>>>>>>>>%
%======================================%
%

\subsection{Penrose-like inequality for a DTTS}
\label{Sec:PIDTTS}

We present the following theorem
on  a Penrose-like inequality for a DTTS:

\begin{thm} \label{thm-DTTS}
  Consider a spacetime which satisfies the Einstein equations
  with the energy-momentum tensor $T_{ab}$ composed 
  of the Maxwell field part,
  $T_{ab}^{(\rm em)}$ given by Eq.~\eqref{EM-energy-momentum-tensor},
  and the ordinal matter part, $T_{ab}^{(m)}$.
  We suppose that an asymptotically flat spacelike hypersurface $\Sigma$
  is time-symmetric and foliated by the inverse mean curvature flow.
  We also assume that a slice of the foliation parameterized by $y$, where each
  of the $y=\mathrm{constant}$
  surfaces, $\sigma_y$, has topology $S^2$, and $\sigma_0$ is a convex DTTS.   
  We further assume the electric charge density $\rho_{\rm e}$
  to vanish outside $\sigma_0$.
  Then, if $\rho^{(m)}:=T_{ab}^{(m)} n^a n^b \geq 0$
  in the outside region of $\sigma_0$
  and
  $P_r^{(m)}:=T_{ab}^{(m)} r^a r^b \leq 0$
  on $\sigma_0$, 
  where $n^a$ is the future-directed 
  timelike unit normal to $\Sigma$ and
  $r^a$ is the outward spacelike unit normal to $\sigma_y$ in $\Sigma$,   
  the areal radius $r_0=(A_0/4\pi)^{1/2}$ of
  the convex DTTS $\sigma_0$ 
  satisfies the inequality 
\begin{eqnarray}
m \geq \frac{1}{3}\Bigl(1+ \Phi_0^- \Bigr)r_0+\frac{2q^2}{3r_0}, \label{dttsineq}
\end{eqnarray}
where $m$ is the ADM mass and $q$ is the total charge.
Here, $ \Phi^-_0$ is defined by 
\begin{eqnarray}
  \Phi^-_0:=\frac{1}{8\pi}\int_{\sigma_0}[(B_a r^a)^2-(E_aE_b+B_aB_b)h^{ab}]dA,
  \label{Phi-minus-0}
\end{eqnarray}
where $h_{ab}$ is the induced metric of $\sigma_0$. 
\end{thm}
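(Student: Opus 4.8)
The plan is to run the three-stage argument of the proof of Theorem~\ref{thm-LTS} --- an integrated bound on the mean curvature $k$ of the surface inside $\Sigma$, Geroch monotonicity along the inverse mean curvature flow, and integration of the flow out to spatial infinity --- and to change only the first stage, basing it on the DTTS conditions instead of on $r^aD_ak\ge 0$. Since $\Sigma$ is time-symmetric, $K_{ab}=0$ and the Hamiltonian constraint reads ${}^{(3)}R=16\pi\rho=2(E^aE_a+B^aB_a)+16\pi\rho^{(m)}$, so $\rho^{(m)}\ge 0$ outside $\sigma_0$ reproduces exactly the curvature condition ${}^{(3)}R\ge 2(E^aE_a+B^aB_a)$ assumed in Theorem~\ref{thm-LTS}. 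Consequently $dE(y)/dy$ again obeys Eq.~\eqref{monotonic} with the same non-negative $\Phi_y^+$, and integrating over $0\le y<\infty$ --- with $A(y)=A_0e^y$, $E(\infty)=m$, and $E(0)=\frac{A_0^{1/2}}{4\pi^{1/2}}\bigl(1-\frac{1}{16\pi}\int_{\sigma_0}k^2dA\bigr)$ via $\int_{\sigma_0}{}^{(2)}R\,dA=8\pi$ --- yields
\begin{equation}
m\ \ge\ \frac{A_0^{1/2}}{4\pi^{1/2}}\left(1-\frac{1}{16\pi}\int_{\sigma_0}k^2dA\right)+\frac{\pi^{1/2}q^2}{A_0^{1/2}}+\frac{1}{4\pi^{1/2}}\int_0^\infty\Phi_y^+A^{1/2}(y)\,dy .
\end{equation}
Dropping the last, non-negative term, the inequality Eq.~\eqref{dttsineq} then follows from the same elementary manipulation as in the closing lines of the proof of Theorem~\ref{thm-LTS}, provided one can establish the DTTS counterpart of Eq.~\eqref{k-ineq},
\begin{equation}
\int_{\sigma_0}k^2dA\ \le\ \frac{16\pi}{3}-\frac{4}{3}\frac{(4\pi q)^2}{A_0}-\frac{32}{3}\pi\,\Phi_0^- ,
\label{DTTS-k-bound-plan}
\end{equation}
with $\Phi_0^-$ as in Eq.~\eqref{Phi-minus-0}.

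All the new work is in Eq.~\eqref{DTTS-k-bound-plan}, which must be produced from the DTTS conditions $\bar k=0$, $\mathrm{max}(\bar K_{ab}k^ak^b)=0$ and ${}^{(3)}\bar{\mbox \pounds}_{\bar n}\bar k\le 0$ holding at every point of $\sigma_0$. The starting point is the Lorentzian counterpart of the identity Eq.~\eqref{deriv.mean.lts}: the Gauss--Codazzi/Raychaudhuri relation for the evolution of the mean curvature $\bar k$ of $\sigma_0$ along its timelike normal $\bar n^a$ in $S$, which --- on a time-symmetric slice, where $S$ may be taken adapted to $\Sigma$ as in Refs.~\cite{Yoshino:2020-1, Yoshino:2020-2} --- expresses ${}^{(3)}\bar{\mbox \pounds}_{\bar n}\bar k$ through $-\tfrac12{}^{(2)}R$, a lapse-gradient term on $\sigma_0$, quadratic terms in the extrinsic curvatures of $\sigma_0$ in $\Sigma$ and of $S$ in the spacetime, and a spacetime-curvature term which the Einstein equations --- together with the tracelessness of $T_{ab}^{(\rm em)}$ --- convert into $-8\pi P_r=-8\pi T_{ab}r^ar^b$ (this is precisely why only $P_r^{(m)}$, and not $\rho^{(m)}$, need be constrained on $\sigma_0$). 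Setting $\bar k=0$, integrating over $\sigma_0$, and applying the Gauss--Bonnet theorem $\int_{\sigma_0}{}^{(2)}R\,dA=8\pi$ together with ${}^{(3)}\bar{\mbox \pounds}_{\bar n}\bar k\le 0$ should give an inequality of the schematic form $4\pi\ge\int_{\sigma_0}[\,\tfrac34 k^2-8\pi P_r+(\text{non-negative remainder})\,]dA$. The remaining two conditions are exactly what make the remainder non-negative and fix the coefficient $\tfrac34$: $\mathrm{max}(\bar K_{ab}k^ak^b)=0$, evaluated along the null directions tangent to $S$, bounds the acceleration/lapse component $\bar K_{\bar n\bar n}$ and the mixed components $\bar K_{\bar n e}$ pointwise in terms of $k$ and the principal curvatures of $\sigma_0$, while convexity makes the extrinsic curvature $k_{ab}$ of $\sigma_0$ in $\Sigma$ positive semi-definite (so $\tfrac12 k^2\le k_{ab}k^{ab}\le k^2$), so that the $\bar K_{ab}$-quadratic terms combine favourably; this reproduces, for a DTTS, the $\tfrac12 k^2+\tfrac12 k_{ab}k^{ab}\ge\tfrac34 k^2$ mechanism responsible for the coefficient $\tfrac34$ in Eq.~\eqref{calculation-2DRicci-LTS}. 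Finally $P_r^{(m)}\le 0$ on $\sigma_0$ lets $-8\pi P_r^{(m)}\ge 0$ be dropped with the inequality preserved, while Eq.~\eqref{radial-pressure-general} gives, for the electromagnetic part, $-8\pi P_r^{(\rm em)}=(E_ar^a)^2+(B_ar^a)^2-(E_aE_b+B_aB_b)h^{ab}$; since $\int_{\sigma_0}E_ar^adA=4\pi q$ by Gauss' law (valid because $\rho_{\rm e}=0$ outside $\sigma_0$), Cauchy-Schwarz gives $\int_{\sigma_0}(E_ar^a)^2dA\ge(4\pi q)^2/A_0$, and since $(B_ar^a)^2-(E_aE_b+B_aB_b)h^{ab}$ integrates to $8\pi\Phi_0^-$, one arrives at Eq.~\eqref{DTTS-k-bound-plan}.

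The step I expect to be the main obstacle is precisely the last one: writing down the correct timelike-normal evolution identity for $\bar k$, and then checking that $\mathrm{max}(\bar K_{ab}k^ak^b)=0$ together with convexity really do leave at least $\tfrac34 k^2$ with a non-negative remainder. The $\bar K_{ab}$-quadratic and lapse-gradient terms are not sign-definite individually, and on the Lorentzian hypersurface $S$ some of the $\bar K_{ab}\bar K^{ab}$-type contributions enter with unfavourable signs, so this bookkeeping must be carried out carefully, along the lines of our earlier DTTS analysis \cite{Yoshino:2020-1, Yoshino:2020-2}. Everything else is identical to the proof of Theorem~\ref{thm-LTS}: the Geroch-energy monotonicity is unchanged, because time-symmetry reduces the Hamiltonian constraint to the very same curvature inequality, and the only structural difference is that the matter correction now enters through the radial pressure $P_r$ rather than the energy density, producing $\Phi_0^-$ --- which, unlike $\Phi_0^+$, need not be sign-definite and may therefore loosen rather than tighten the bound --- in place of $\Phi_0^+$.
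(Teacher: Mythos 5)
Your proposal is correct and follows essentially the same route as the paper: the only new ingredient is the pointwise bound $\frac{3}{4}k^2\le\frac12{}^{(2)}R+8\pi P_r$ on $\sigma_0$, obtained from the identity ${}^{(3)}\bar{\mbox\pounds}_n\bar{k}=-\frac12{}^{(2)}R-8\pi P_r+\frac12(2kk_{\rm L}+k^2-k_{ab}k^{ab})$ together with the convexity inequality $2kk_{\rm L}+k^2-k_{ab}k^{ab}\ge\frac32k^2$, after which Gauss--Bonnet, Cauchy--Schwarz, Gauss' law and $P_r^{(m)}\le 0$ give exactly your Eq.~\eqref{DTTS-k-bound-plan}, and the Geroch-monotonicity stage is unchanged. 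The step you flag as the main obstacle is precisely the content of the evolution identity and convexity bound already established in Ref.~\cite{Yoshino:2020-1}, which the paper simply quotes, so no additional bookkeeping is needed.
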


\begin{proof}
On $\sigma_0$, the Lie derivative of the mean curvature $\bar{k}$ along $n^a$ is given by 
\cite{Yoshino:2020-1}
\begin{eqnarray}
{}^{(3)}\bar{\mbox \pounds}_n \bar{k} = -\frac{1}{2}{}^{(2)}R-8\pi P_r+\frac{1}{2}(2kk_{\rm L}+k^2-k_{ab}k^{ab}), 
\label{deriv.mean.dtts}
\end{eqnarray}
where ${}^{(2)}R$ is the Ricci scalar of $\sigma_0$, $k_{ab}$ is the extrinsic curvature of $\sigma_0$ in $\Sigma$, that is, 
$k_{ab}=(1/2){}^{(3)}\mbox \pounds_{r}h_{ab}$, $k$ is its trace, $k_{\rm L}$ is the largest value of the 
eigenvalues of $k_{ab}$ and $P_r$ is the radial pressure defined
in Eq.~\eqref{radial-pressure-general}. 
Then, the condition ${}^{(3)}\bar{\mbox \pounds}_n \bar{k}|_{\sigma_0}\le 0$
gives us 
\begin{equation}
  \frac{3}{4}k^2 \ \le\ \frac12{}^{(2)}R+(E_aE_b+B_aB_b)h^{ab}
  -(E_ar^a)^2-(B_ar^a)^2+8\pi P_r^{(m)}
  \label{Contraction-condition-in-terms-of-E-and-B}
\end{equation}
on $\sigma_0$, where we used Eq.~\eqref{radial-pressure-general} 
and the inequality for a convex DTTS \cite{Yoshino:2020-1}, 
$2kk_{\rm L}+k^2-k_{ab}k^{ab}\ge (3/2)k^2$. 
With the condition $P_r^{(m)}\le 0$,
the integration of the above over $\sigma_0$ gives us
\footnote{Note that the origin for $\Phi^-_0$ in the case of a DTTS
  is different from that for $\Phi^+_0$ in the case of an LTS.
  As we can see from the derivation, $\Phi^\pm_0$ 
  come from the energy density and the radial pressure
  of electromagnetic fields 
  in the cases of the LTS and the DTTS, respectively.
  This is essential reason why we have different results
  for the LTS and the DTTS.}
\begin{eqnarray}
\frac{3}{4}\int_{\sigma_0}k^2dA
&\le&
\frac12\int_{\sigma_0}{}^{(2)}RdA-8\pi \Phi_0^-
-\int_{\sigma_0}(E_ar^a)^2dA\nonumber\\
&\le&
4\pi-\frac{(4\pi q)^2}{A_0}-8\pi \Phi^-_0,\label{k-ineq-dtts}
\end{eqnarray}  
where we used the Gauss-Bonnet theorem, the Cauchy-Schwarz inequality
and Gauss' law as in the proof for Theorem \ref{thm-LTS}.  

We now consider the inverse mean curvature flow in which
the foliation is given by $y=\mathrm{constant}$ surfaces.
Similarly to the proof for Theorem \ref{thm-LTS},
the surfaces $y=0$ and $y=\infty$ are set to be
the DTTS and a sphere at spacelike infinity, respectively.
With the same procedure, 
one can derive the inequality of Eq.~\eqref{monotonic}
again. Then, the integration of the inequality
of Eq.~\eqref{monotonic} over the range $0\le y < \infty$ shows us 
\begin{eqnarray}
  m
  & \geq &
  \frac{A_0^{1/2}}{4\pi^{1/2}}\left(1-\frac{1}{16\pi}\int_{\sigma_0}k^2dA\right)
  +\frac{\pi^{1/2}q^2}{A_0^{1/2}}
  \nonumber\\
  & \geq &
  \frac{A_0^{1/2}}{6\pi^{1/2}}+\frac{4}{3}\frac{\pi^{1/2}q^2}{A_0^{1/2}}+\frac{A_0^{1/2}}{6\pi^{1/2}}{\Phi}^-_0,
\label{DTTSm}
\end{eqnarray}
where we used the inequality of Eq.~\eqref{k-ineq-dtts} at the last step.
Then, we arrive at the inequality given by Eq.~\eqref{dttsineq}. \\
\end{proof}

There are four remarks. 
Similarly to Theorem \ref{thm-LTS}, in general,
the minimum value of the right-hand side of the inequality
of Eq.~\eqref{dttsineq}  
implies us the lower bound for $m$.
However, unlike $\Phi_0^+$ in the case of an LTS,
the quantity $\Phi^-_0$ does not have a definite signature. 
For $\Phi_0^- <-1$, there is no such restriction for $m$,
whereas, for $\Phi_0^- \geq -1$, $m$ has a lower bound, 
\begin{eqnarray}
m \geq \frac{2{\sqrt {2}}}{3}{\sqrt {1+\Phi_0^-}}\ |q|. \label{mbounddtts}
\end{eqnarray}

Next, under the condition of the inequality
of Eq.~\eqref{mbounddtts},
a short rearrangement of the inequality of Eq.~\eqref{DTTSm} gives
\begin{eqnarray}
4\pi r_{{\rm DTTS}-}^2\leq   A_0 \leq 4 \pi r_{{\rm DTTS}+}^2,
  \label{cplidtts}
\end{eqnarray}
where 
\begin{eqnarray}
  r_{{\rm DTTS}\pm}:=\frac{3m \pm \sqrt{9m^2-8\left(1+\Phi^-_0 \right)q^2}}{2\left(1+\Phi^-_0 \right)}.
  \label{RDTTSPM}
\end{eqnarray}
However, from the same reason to the remark in Theorem \ref{thm-LTS},
we expect that the lower bound is not correct 
for a DTTS with multiple components, and just the inequality  
\begin{eqnarray}
A_0 \leq 4\pi r_{{\rm DTTS}+}^2
\end{eqnarray}
would hold in a general context. For a DTTS with a single component,
the lower bound must hold true with the same physical reason as the one
given in Sect.~\ref{Sec:PenroseLike-inequality-LTS}.

As a third remark, in a similar way to the case of an LTS,
the obtained inequality depends on 
the electromagnetic field. Interestingly,
if $\Phi_0^-$ is negative, the contribution
from $\Phi_0^-$ makes the inequality weaker 
than the cases of $\Phi^-_0=0$.
Furthermore, for the case of $\Phi^-_0 \leq -1 $, 
the upper bound disappears. Let us discuss the effect of $\Phi_0^-$
physically. 
It is known that there are two kinds of pressure for magnetic fields.
One is the negative pressure in the direction of magnetic field lines,
called the magnetic tension. The other is
repulsive interaction (i.e., positive pressure)
between two neighboring magnetic field lines, called the magnetic pressure.
A similar thing happens also to electric field lines
(say, the electric tension and the electric pressure).
We recall the formula for $8\pi P_r$,
Eq.~\eqref{radial-pressure-general}.
In that formula, $-(E_ar^a)^2-(B_ar^a)^2$
is the contribution of the electric/magnetic tension, while
$(E_aE_b+B_aB_b)h^{ab}$ is the contribution of the electric/magnetic pressure.
Then, Eq.~\eqref{deriv.mean.dtts} tells that 
the electric/magnetic tension makes the formation of a DTTS difficult,
while the electric/magnetic pressure helps the formation of a DTTS.
Therefore, in the presence of the electric/magnetic pressure,
the area of a DTTS tends to be larger.
This is the reason why the upper bound of the area of a DTTS 
becomes larger when $\Phi_0^-$ is negative.
Nevertheless, the negativity of $\Phi_0^-$ would not change the situation so much in the following reason. 
If $\Phi_0^-$ is negative, the upper bound for the DTTS becomes weaker and the DTTS can exist at farther outside. 
However, $\Phi_0^-$ depends on the position of the DTTS and we naively expect that it is sharply decreasing according to the distance from the center, 
if the electromagnetic field is intrinsic to the compact object; namely, monopole or multi-pole fields. 
Therefore, when we take a farther surface,  $\Phi_0^-$ becomes immediately negligible. 
Then, the area of the DTTS cannot be large. 
On the other hand, $\Phi_0^-$ could be large at some point by extrinsic effects, such as, external fields and/or dynamical generation of fields.  
%Note that the effect of the electric/magnetic pressure appears
%only in the case where
%the spacetime is not spherically symmetric.
%For realistic black hole magnetospheres in our universe,
%the value of $\Phi^-_0$ depends on the magnetic field configurations.
%If a black hole has dipole-type magnetic fields,
%the value of $\Phi^-_0$ would be positive.

The final remark is on the relation between
an LTS and a convex DTTS $\sigma_0$ on time-symmetric initial data. Recall 
Proposition 1 in Ref. \cite{Yoshino:2020-1}, that is, a convex DTTS
with $k>0$ in time-symmetric initial 
data is an LTS as well if $\rho+P_r=0$ is satisfied on $\sigma_0$. Since
\begin{eqnarray}
8\pi \left(\rho^{({\rm em})}+P_r^{({\rm em})}\right)=2(E_aE_b+B_aB_b)h^{ab},
\end{eqnarray}
the presence of $\Phi^\pm_0$ disturbs the equivalence between an LTS
and a convex DTTS in general. 
This feature is reflected in the two inequalities obtained in this paper.

%
%======================================%
%<<<<<<<<<<<< SECTION IV  >>>>>>>>>>>>>>%
%======================================%
%

\section{Numerical examination of the Majumdar-Papapetrou spacetime}
\label{Sec:numerical}

In Sect.~\ref{Sec:PIDTTS}, we have obtained the Penrose-like inequality
for a DTTS. There, the quantity $\Phi^-_0$ appears, and this quantity
depends on the configuration of electromagnetic fields.
The purpose of this section is to examine the values 
of $\Phi^-_0$ in an explicit example. Specifically,
in our previous paper \cite{Yoshino:2020-1},
we numerically solved for marginally DTTSs in systems of
two equal-mass black holes adopting the Majumdar-Papapetrou solution.
We revisit this problem from the viewpoint of our current work.

A Majumdar-Papapetrou spacetime is a static
electrovacuum spacetime. The metric is
\begin{equation}
ds^2 = -U^{-2}dt^2 + U^2[d\tilde{r}^2 + \tilde{r}^2(d\tilde{\theta}^2+\sin^2\tilde{\theta}d\tilde{\phi}^2)],
\end{equation}
where the spatial structure is conformally flat, and
we span the spherical-polar coordinates here.
The electromagnetic four-potential is
\begin{equation}
A_a=U^{-1}(dt)_a = -n_a.
\end{equation}
Any solution to the Laplace equation $\bar{\nabla}^2U=0$
gives an exact solution, where $\bar{\nabla}^2$ is the flat space Laplacian.
In this situation, $E^a$ and $B^a$ are calculated as
\begin{equation}
  E^a=-\frac{D^aU}{U}, \qquad B^a=0.
  \label{electric-magnetic-fields-MP}
\end{equation}
Setting $E_\parallel^2:=h_{ab}E^aE^b$,
we have $8\pi \Phi^-_0=-\int_{\sigma_0} E_\parallel^2dA$. 
Therefore, the value of $\Phi^-_0$ is non-positive.

In our previous paper \cite{Yoshino:2020-1}, we chose
the solution
\begin{equation}
  U=1+\frac{m/2}{\sqrt{\tilde{r}^2+z_0^2-2\tilde{r}z_0\cos\tilde{\theta}}}
  + \frac{m/2}{\sqrt{\tilde{r}^2+z_0^2+2\tilde{r}z_0\cos\tilde{\theta}}} 
\end{equation}
that represents the system in which
two extremal black holes with the same charge
are located with the coordinate distance $2z_0$.
Then, assuming the functional form 
$\tilde{r}=h(\tilde{\theta})$, we numerically solved
for a marginally DTTS that surrounds both black holes for each value of $z_0$.
The solution was found in the range
$0\le z_0/m\le 0.79353$. We refer readers to our previous paper
\cite{Yoshino:2020-1} for explicit shapes of the obtained solutions.

%===========<FIGURE1>============%
%
\begin{figure}[tb]
\centering
\includegraphics[width=0.39\textwidth,bb=0 0 407 300]{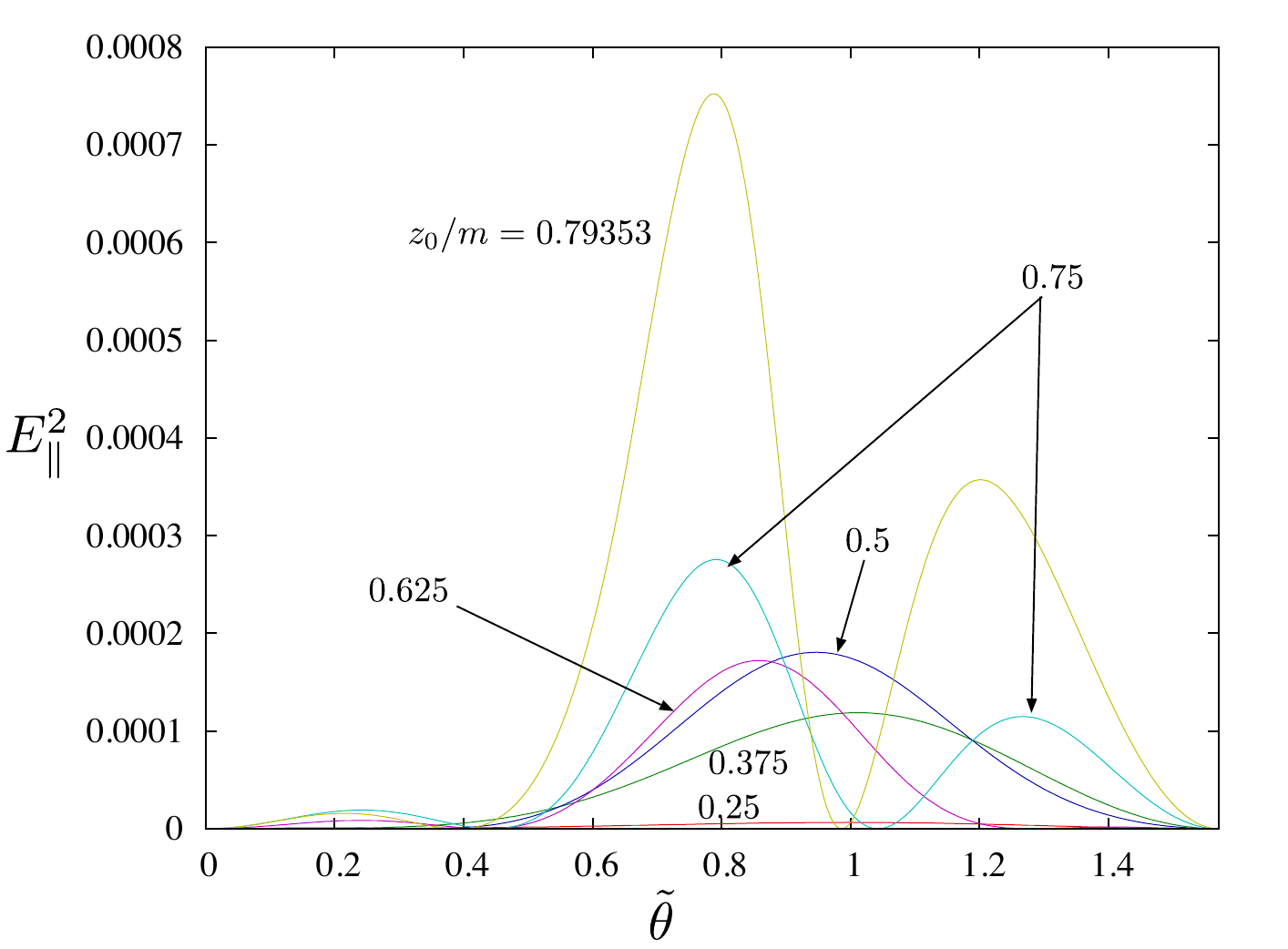}
\hspace{5mm}
\includegraphics[width=0.4\textwidth,bb=0 0 407 301]{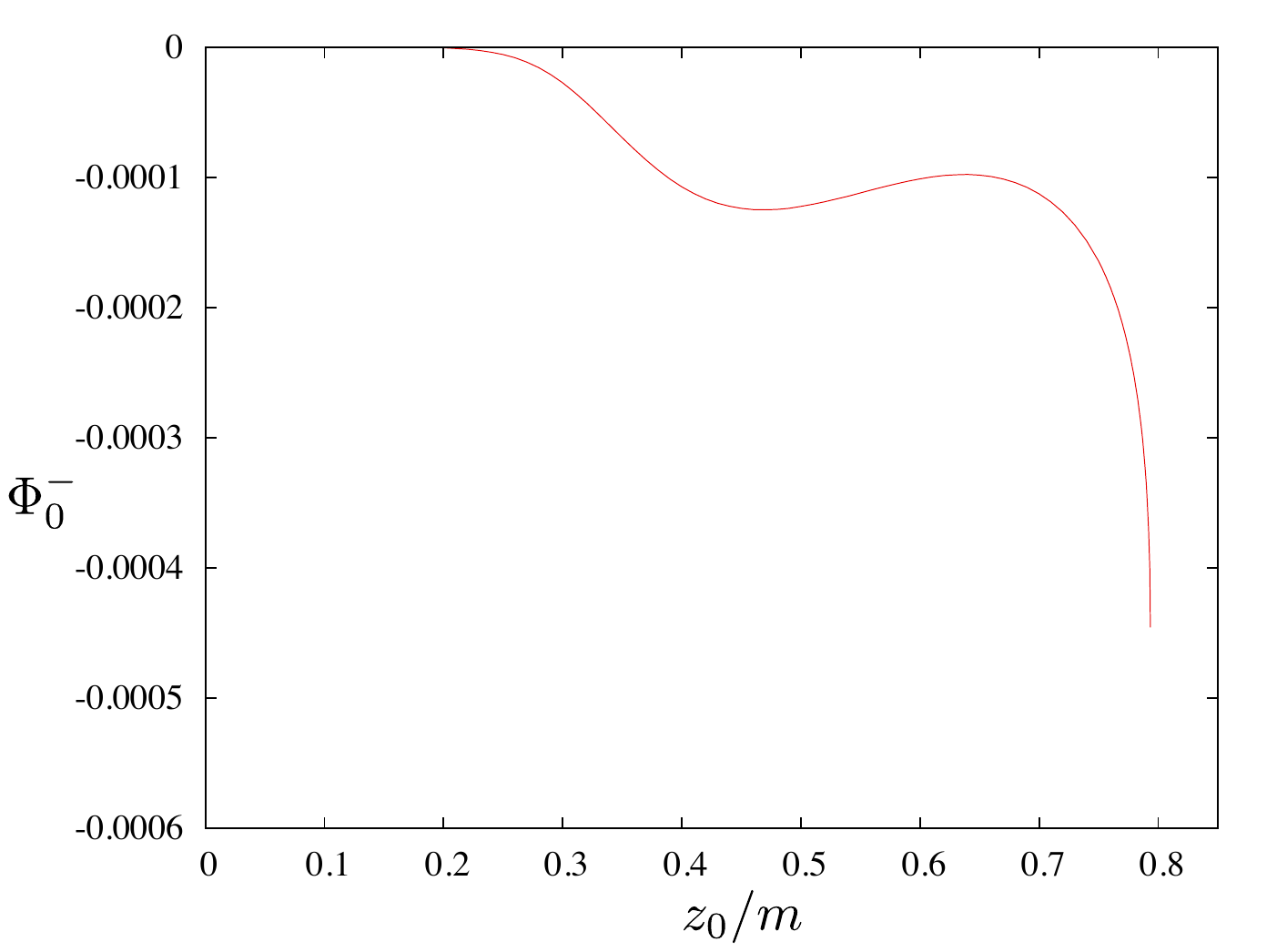}
\caption{The quantities related to $\Phi^-_0$
  of a marginally DTTS in a Majumdar-Papapetrou two-black-hole
  spacetime. Left panel:
  The value of $E_{\parallel}^2$ as a function of
  $\tilde{\theta}$ on a marginally DTTS
  for $z_0/m = 0.25$, $0.375$, $0.5$, $0.625$, $0.75$,
  and $0.79353$.
  Right panel: The value of $\Phi^-_0$ for a marginally DTTS
  as a function of $z_0/m$. 
}
\label{Epara-theta-combined}
\end{figure}
%
%=================================%

We examine the value of $\Phi^-_0$. 
The left panel of Fig.~\ref{Epara-theta-combined} presents
the behavior of $E_\parallel^2$ as a function of $\tilde{\theta}$
on a marginally DTTS for $z_0/m = 0.25$, $0.375$, $0.5$, $0.625$, $0.75$,
and $0.79353$. The value of $E_\parallel^2$ is generally nonzero,
but is less than $10^{-3}$. 
The right panel of Fig.~\ref{Epara-theta-combined}
plots the value of $\Phi^-_0$ as a function of $z_0$.
It is negative and its absolute value is less than $5\times 10^{-4}$.

%===========<FIGURE1>============%
%
\begin{figure}[tb]
\centering
\includegraphics[width=0.4\textwidth,bb= 0 0 407 298]{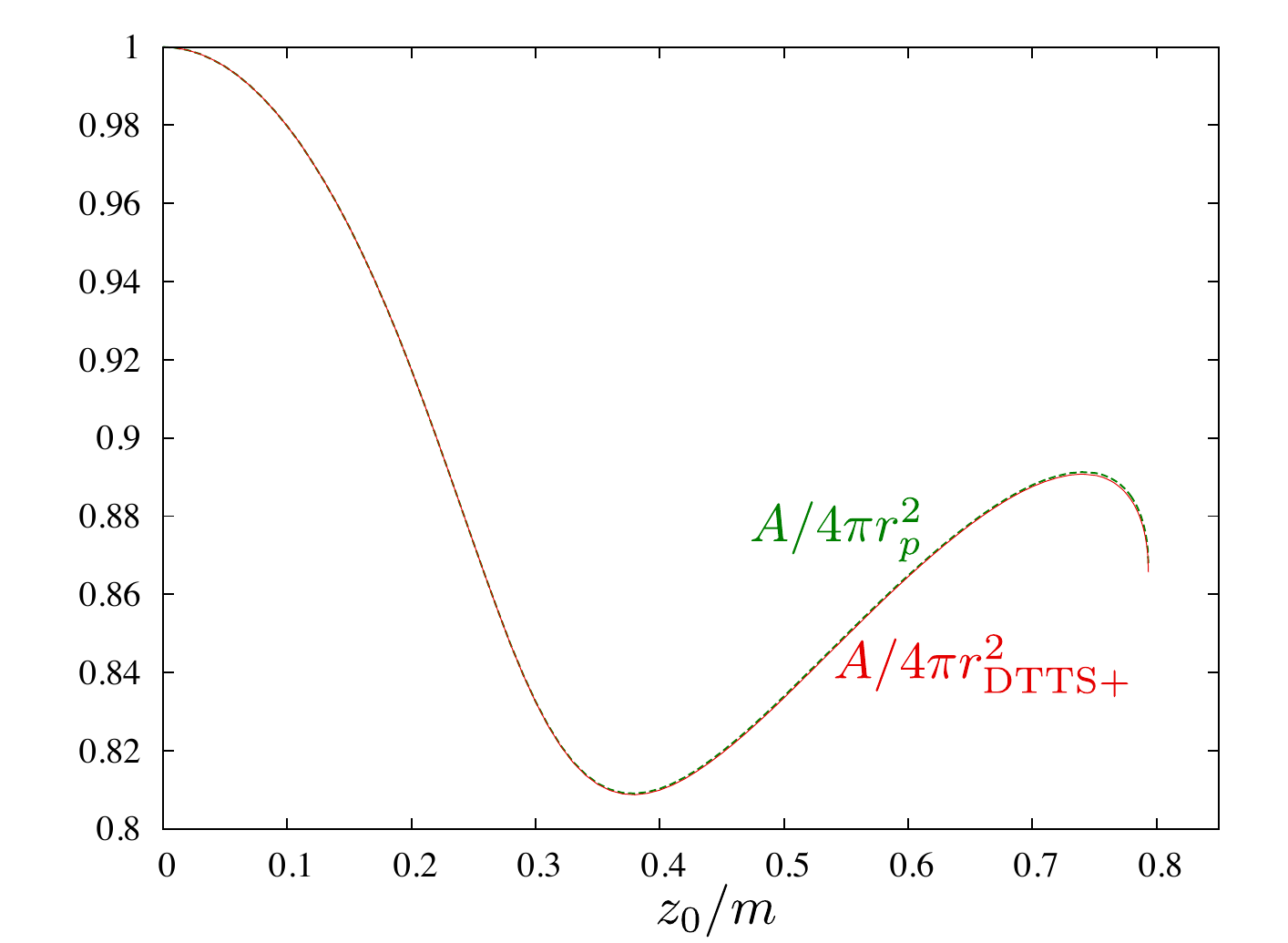}
\caption{The area $A$ of a common marginally DTTS
  in a Majumdar-Papapetrou spacetime as a function of $z_0/m$.
  The cases of two kinds of normalizations are shown.
  One is $A/4\pi r_{\rm DTTS+}^2$ and the other is $A/4\pi r_p^2$.
  See text for details.
}
\label{area-z0-combined}
\end{figure}
%
%=================================%

Figure~\ref{area-z0-combined} shows the relation
between the area $A$ of the marginally DTTS and $z_0$.
We normalize the value of $A$ in two ways: One is
$A/4\pi r_{\rm DTTS+}^2$ (a red solid curve), where $r_{\rm DTTS+}$ is defined in
Eq.~\eqref{RDTTSPM}, and the other is $A/4\pi r_p^2$
(a green dotted curve), 
where $4\pi r_p^2$ is the area of a photon sphere
with the same mass and charge [see Eq.~\eqref{rp}]. 
Because the value of $\Phi^-_0$ is small, the difference
is scarcely visible. Both of these values are
in agreement with the Penrose-like inequalities.

%===========<FIGURE1>============%
%
\begin{figure}[tb]
\centering
\includegraphics[width=0.4\textwidth,bb=0 0 302 300]{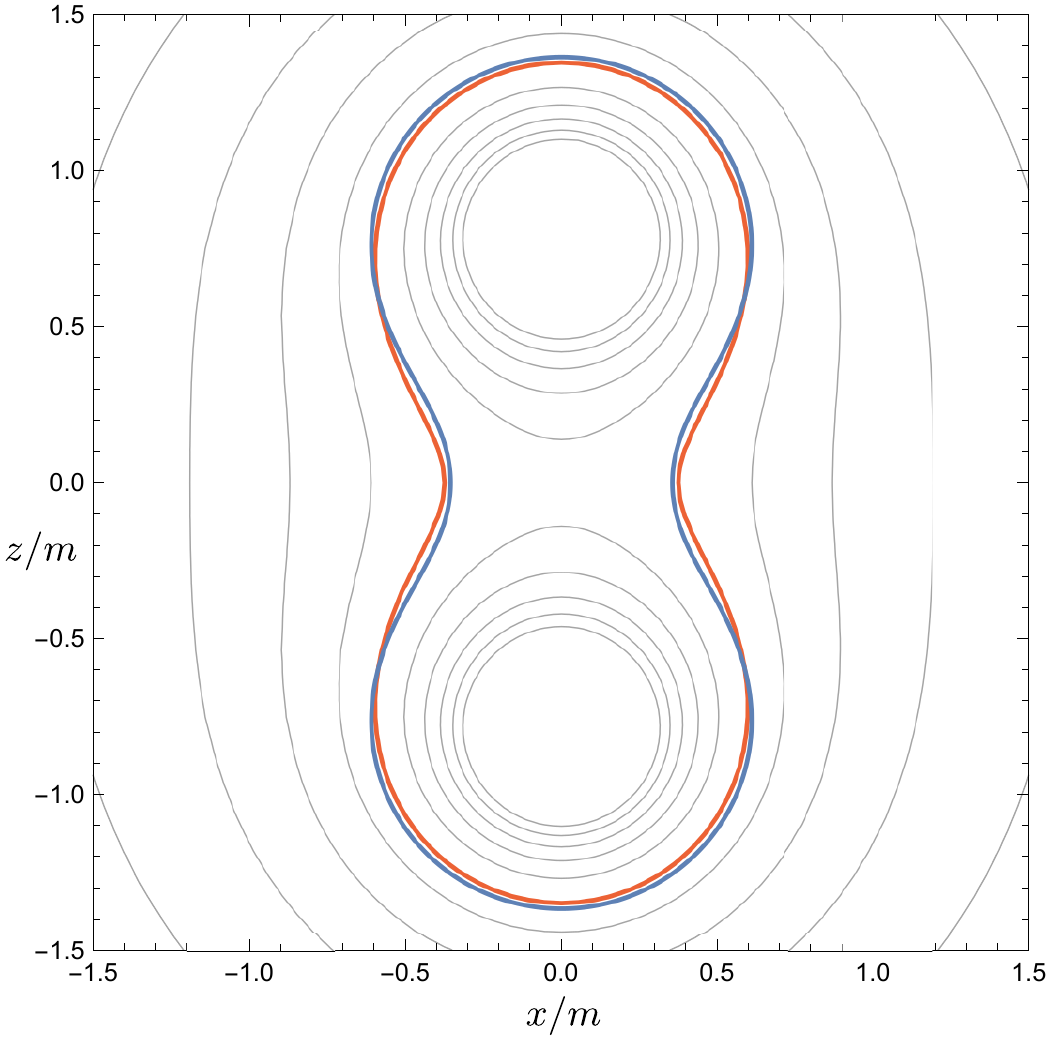}
\caption{The marginally DTTS for $z_0/m=0.79353$
  (the blue curve)
  and the contour surfaces of $U$ in the Majumdar-Papapetrou two-black-hole
  spacetime (gray and red curves). The red curve depicts the
  contour surface of $U=2.14$. 
}
\label{Compare-Contour-DTTS}
\end{figure}
%
%=================================%

The reason why the value of $\Phi^-_0$ is so small
is that the electric field is approximately perpendicular
to the marginally DTTS. From the formula for $E^a$ given in
Eq.~\eqref{electric-magnetic-fields-MP},
this means that the marginally DTTS approximately coincides
with a contour surface of $U$. Figure~\ref{Compare-Contour-DTTS}
confirms this feature for $z_0/m=0.79353$.
Here, the blue curve depicts the marginally DTTS, and the red curve
shows the contour surface of $U=2.14$. They agree well.

The lesson from this numerical experiment is that
if a spacetime is static, the quantity $\Phi^-_0$ is small
and does not play an important role in the Penrose-like inequality
for a DTTS.
Of course, it is expected that the absolute value of $\Phi^-_0$
may become large if dynamical situations are considered.
For example, if two black holes have opposite charges,
the contribution of the electric pressure would become important,
although such a situation is more difficult to study.
Exploring such issues is left as a remaining problem.

%
%======================================%
%<<<<<<<<<<<< SECTION V  >>>>>>>>>>>>>>%
%======================================%
%

\section{Summary and discussion}
\label{Sec:Sum}

In this paper, we have examined
the properties of LTSs and DTTSs
for Einstein-Maxwell systems, particularly focusing
on the derivation of Penrose-like inequalities on their area.
Similarly to the Riemannian Penrose inequality for 
charged cases, the electric charge comes into the inequalities,
but there are additional contributions from the density or 
the pressure/tension of electromagnetic fields in general.
This is a rather interesting result because 
one naively expects that
the upper bound for the area of an LTS and a DTTS is 
that of the photon sphere $4\pi r_p^2$, where 
$r_p$ is the radius 
of an unstable circular orbit of a photon
in the Reissner-Nordstr\"om spacetime given in Eq.~\eqref{rp}.  
For an LTS, we have a tighter inequality than the naive one.
For a DTTS, the obtained inequality can become
both stronger and weaker depending on the configuration
of electromagnetic fields.
We have numerically examined the value of the correction term,
represented by $\Phi^-_0$
in Eq.~\eqref{Phi-minus-0},
for a Majumdar-Papapetrou two-black-hole spacetime.
Although the correction term makes the inequality weaker,
we have checked that the value of $\Phi^-_0$ is very small
in that situation.

Up to now, we have assumed
that the magnetic charge $q_{\rm m}$ vanishes.
Here, we consider what happens to our inequalities
when nonzero $q_{\rm m}$ is present.
We first consider the case of an LTS. Since
the total squared charge $q^2 := q_{\rm e}^2+q_{\rm m}^2$ appears in the metric
of the Reissner-Nordstr\"om solution when both the electric and
magnetic fields are present,
we would like to present the Penrose-like inequality
in terms of $q^2$. For this reason, we have to
use the inequality
\begin{equation}
  \int_{S_0} \left[(E_ar^a)^2 +(B_ar^a)^2\right]dA\ge
  A_0^{-1} \left[(4\pi q_{\rm e})^2+(4\pi q_{\rm m})^2\right] = \frac{(4\pi q)^2}{A_0}
  \label{Cauchy-Schwarz-in-the-case-of-nonzero-qm}
\end{equation}
in the calculations of Eqs.~\eqref{calculation-2DRicci-LTS} and \eqref{k-ineq}.
As a result, we must introduce the quantity
\begin{equation}
  \Phi_0 := \frac{1}{8\pi}\int_{S_0}\left[(E_aE_b+B_aB_b)h^{ab}\right]dA
  \label{Def:Phi-0}
\end{equation}
instead of $\Phi^+_0$ of Eq.~\eqref{Phi-plus-0}.
The resultant inequality is the one of Eq.~\eqref{ltsineq}
but $\Phi^+_0$ being replaced by $\Phi_0$.
Next, we consider the case of a DTTS. Similarly to the case of an LTS,
we must use the inequality of Eq.~\eqref{Cauchy-Schwarz-in-the-case-of-nonzero-qm}
(but $S_0$ being replaced by $\sigma_0$) in the calculations
of Eqs.~\eqref{Contraction-condition-in-terms-of-E-and-B} and \eqref{k-ineq-dtts}.
As a result, instead of $\Phi_0^-$, we must introduce $-\Phi_0=
-(1/8\pi)\int_{\sigma_0}\left[(E_aE_b+B_aB_b)h^{ab}\right]dA$.
The resultant inequality is the same as the one of Eq.~\eqref{dttsineq},
but $\Phi^-_0$ being replaced by $-\Phi_0$. These results are summarized
as follows:
\begin{cor}
  In the presence of nonzero $q_{\rm m}$, Theorems~\ref{thm-LTS} and
  \ref{thm-DTTS} hold by changing from  $\Phi^\pm_0$ to $\pm \Phi_0$,
  where $\Phi_0$ is defined by Eq.~\eqref{Def:Phi-0}.
\end{cor}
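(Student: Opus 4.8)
The plan is to revisit the proofs of Theorems~\ref{thm-LTS} and~\ref{thm-DTTS} and to isolate precisely the two places where the assumption $q_{\rm m}=0$ entered: (i) the algebraic splitting of $E^aE_a+B^aB_a$ (and of the terms in Eq.~\eqref{Contraction-condition-in-terms-of-E-and-B}) into a ``radial'' part that is kept outside the correction term and a ``tangential'' part that is collected into $\Phi^\pm_0$; and (ii) the subsequent use of the Cauchy--Schwarz inequality together with Gauss' law for $E^a$ to generate the $(4\pi q)^2/A$ term. When $q_{\rm m}=0$ one keeps only $(E_ar^a)^2$ outside the correction term and absorbs $(B_ar^a)^2$ into $\Phi^\pm_0$. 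The single modification needed for nonzero $q_{\rm m}$ is to keep \emph{both} $(E_ar^a)^2$ and $(B_ar^a)^2$ outside, so that the correction term reduces to $\Phi_0$ of Eq.~\eqref{Def:Phi-0}, which contains only the tangential part $(E_aE_b+B_aB_b)h^{ab}$.

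For the LTS, I would first rerun the computation of Eq.~\eqref{calculation-2DRicci-LTS} keeping $(E_ar^a)^2+(B_ar^a)^2$ together, obtaining the same bound with $8\pi\Phi^+_0$ replaced by $8\pi\Phi_0=\int_{S_0}(E_aE_b+B_aB_b)h^{ab}dA$ and with $\int_{S_0}(E_ar^a)^2dA$ replaced by $\int_{S_0}[(E_ar^a)^2+(B_ar^a)^2]dA$. Then, instead of Cauchy--Schwarz on the electric flux alone, I would invoke Eq.~\eqref{Cauchy-Schwarz-in-the-case-of-nonzero-qm}, which follows from applying Cauchy--Schwarz separately to the electric and magnetic radial fluxes and then using Gauss' law for each: $\int_{S_0}E_ar^adA=4\pi q_{\rm e}$ (valid since $\rho_{\rm e}=0$ outside $S_0$) and $\int_{S_0}B_ar^adA=4\pi q_{\rm m}$ (valid since there are no magnetic monopoles, so the flux equals its value at $S_\infty$). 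This reproduces Eq.~\eqref{k-ineq} verbatim with $\Phi^+_0\to\Phi_0$. Since $\Phi_0\ge 0$ (and the same estimate gives $\Phi_y\ge 0$ for every flow slice $S_y$, hence Eq.~\eqref{monotonic} is unchanged), the integration over $y\in[0,\infty)$ goes through unmodified and yields Eq.~\eqref{ltsineq} with $\Phi^+_0$ replaced by $\Phi_0$.

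For the DTTS the argument is parallel: in the integrated form of Eq.~\eqref{Contraction-condition-in-terms-of-E-and-B} I would retain both $-(E_ar^a)^2$ and $-(B_ar^a)^2$, use the Gauss--Bonnet theorem for $\sigma_0\cong S^2$, and apply Eq.~\eqref{Cauchy-Schwarz-in-the-case-of-nonzero-qm} with $S_0$ replaced by $\sigma_0$. The minus signs cooperate with the direction of the estimate, so this produces Eq.~\eqref{k-ineq-dtts} with $-8\pi\Phi^-_0$ replaced by $+8\pi\Phi_0$, which is exactly the substitution $\Phi^-_0\to-\Phi_0$; the remaining steps, in particular the $y$-integration leading to Eq.~\eqref{DTTSm}, are untouched and give Eq.~\eqref{dttsineq} with $\Phi^-_0\to-\Phi_0$.

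The main point to watch — rather than a genuine obstacle — is the magnetic Gauss' law on the flow slices: one needs $\int_{\sigma_y}B_ar^adA=4\pi q_{\rm m}$ (and $\int_{S_y}B_ar^adA=4\pi q_{\rm m}$) for every $y$, which holds because each slice is homologous to $S_\infty$ along the inverse mean curvature flow and no magnetic sources are present, so $D_aB^a=0$ makes the flux topological. Everything else is bookkeeping: in each of the four displayed inequalities across the two proofs, the magnetic radial term $(B_ar^a)^2$ now feeds the charge term through $q^2=q_{\rm e}^2+q_{\rm m}^2$ instead of feeding the correction term, which is precisely the content of the Corollary.
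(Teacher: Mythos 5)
Your proposal is correct and follows essentially the same route as the paper: keep both radial flux terms $(E_ar^a)^2+(B_ar^a)^2$ outside the correction, bound them via Cauchy--Schwarz and Gauss' law for each charge to produce $(4\pi q)^2/A_0$ with $q^2=q_{\rm e}^2+q_{\rm m}^2$, and let the remaining tangential piece define $\pm\Phi_0$. Your added remark that the magnetic flux through each flow slice is topological (from $D_aB^a=0$ and homology to $S_\infty$) is a point the paper leaves implicit, but it is not a different argument.
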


In the main article of this paper, we have not considered a TTS 
for the static and stationary spacetimes
defined in our previous paper \cite{Yoshino:2017}.
We note that the concepts of a TTS and a DTTS are
related but independent of each other
in the sense that no inclusion relationship can be found \cite{Yoshino:2020-1}.
In Appendix~\ref{Appendix-A}, we present a theorem on the Penrose-like
inequality for a TTS in a static spacetime, which is very similar
to Theorem~\ref{thm-DTTS}.

Throughout this study, we have not used the property of
Maxwell's equations except for Gauss' law.
The information from Maxwell's equation may further restrict the
properties of LTSs, DTTSs, and TTSs, especially for
static/stationary spacetimes with static/stationary electromagnetic fields. 
%In pursuing such a direction, the membrane paradigm \cite{membrane}
%may be useful. Also, since the membrane paradigm is related
%to the AdS/CFT correspondence, deepening our study in this direction
%may also reveal how our concepts of 
%strong gravity regions can play a role in the context of the AdS/CFT
%correspondence.

%

\ack

T. S. and K. I.  are supported by Grant-Aid for Scientific Research from Ministry of Education, 
Science, Sports and Culture of Japan (No. 17H01091). 
H.Y. is supported by the Grant-in-Aid for
Scientific Research (C) (No. JP18K03654) from Japan Society for
the Promotion of Science (JSPS).
The work of H.Y. is partly supported by
Osaka City University Advanced Mathematical Institute
(MEXT Joint Usage/Research Center on Mathematics and Theoretical Physics).

\appendix

%
%======================================%
%<<<<<<<<<<<< APPENDIX A  >>>>>>>>>>>>>>%
%======================================%
%
\section{Transversely trapping surface}
\label{Appendix-A}

The four of the present authors also proposed the concept of a
TTS \cite{Yoshino:2017}.
This concept is applicable only to static or stationary spacetimes. 
The definition is as follows :

\begin{df} \label{def-TTS}
A static/stationary timelike hypersurface $S$ is a transversely trapping surface (TTS) if and 
only if arbitrary light rays emitted in arbitrary tangential directions of $S$ from arbitrary 
points of $S$ propagate on $S$ or toward the inside region of $S$.
\end{df}

The necessary and sufficient condition for a surface $S$ to be a TTS
(the TTS condition hereafter)
is expressed as $\bar{K}_{ab}k^ak^b\le 0$, where $\bar{K}_{ab}$
is the extrinsic curvature of $S$ and $k^a$ are arbitrary null vectors
tangent to $S$. For a static spacetime, there is 
the Killing time coordinate $t$ whose basis vector $t^a$
is orthogonal to the $t=\mathrm{constant}$ hypersurface,
denoted by $\Sigma$. 
The lapse function $\alpha$ is defined by $t^a=\alpha n^a$,
where $n^a$ is a future-directed unit normal to $\Sigma$.
We denote the two-dimensional
section of $S$ and $\Sigma$ by $\sigma_0$. 
As shown in Eq.~(15) of Ref.~\cite{Yoshino:2017}, 
the TTS condition is reexpressed in terms of $\alpha$ as
\begin{equation}
k_{\rm L}\le\frac{r^aD_a\alpha}{\alpha},
\end{equation}
where $k_{\rm L}$ is the largest value among the two eigenvalues
of the extrinsic curvature $k_{ab}$ of $\sigma_0$ in $\Sigma$,
$r_a$ is a spacelike unit normal to $S$, and $D_a$ is the
covariant derivative with respect to $\Sigma$. 
In this situation, it is possible to
derive the relation
\begin{equation}
  {}^{(2)}R=-16\pi P_r+\frac{2}{\alpha}\mathcal{D}^2\alpha
  +2k\frac{r^aD_a\alpha}{\alpha} + k^2-k_{ab}k^{ab}.
\end{equation}
For a convex TTS,
we can derive the inequality
\begin{equation}
2k\frac{r^aD_a\alpha}{\alpha}+k^2-k_{ab}k^{ab}\ge \frac32k^2.
\end{equation}
These relations are presented as Eqs.~(24) and (25) in Ref.~\cite{Yoshino:2017}.
We now consider the Einstein-Maxwell system. 
Using Eq.~\eqref{radial-pressure-general}, we find
\begin{equation}
  \frac{3}{4}k^2 \ \le\ \frac12{}^{(2)}R+(E_aE_b+B_aB_b)h^{ab}
  -(E_ar^a)^2-(B_ar^a)^2+8\pi P_r^{(m)}
  -\frac{1}{\alpha}\mathcal{D}^2\alpha.
\end{equation}
Compare this inequality with the one of Eq.~\eqref{Contraction-condition-in-terms-of-E-and-B}.
Integrating over $\sigma_0$, we obtain exactly the same
inequality as the one of Eq.~\eqref{k-ineq-dtts}.
Therefore, a TTS in a static spacetime satisfies the same inequality
as the Penrose-like inequality for a DTTS in time-symmetric initial data.
This result is summarized as the following theorem:

\begin{thm} \label{thm-TTS}
  The static time cross section of a convex TTS,
  $\sigma_0$, in an asymptotically flat static spacetime has topology $S^2$
  and its areal radius $r_0=\sqrt{A_0/4\pi}$  
  satisfies the inequality of Eq.~\eqref{dttsineq}
  [with $\Phi^-_0$ defined in Eq.~\eqref{Phi-minus-0}] 
  if $P_{r}^{(m)}<0$ holds on $\sigma_0$, $k>0$ at least 
  at one point on $\sigma_0$, and $\rho^{(m)} \geq 0$
  in the outside region.
\end{thm}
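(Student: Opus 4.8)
The plan is to mirror the structure of the proof of Theorem~\ref{thm-DTTS} exactly, using the TTS condition in place of the condition ${}^{(3)}\bar{\mbox \pounds}_n\bar k|_{\sigma_0}\le 0$. First I would collect the three geometric identities/inequalities already quoted from Ref.~\cite{Yoshino:2017}: the expression for ${}^{(2)}R$ in terms of $P_r$, $\alpha$, $k$ and $k_{ab}$; the convexity inequality $2k\,r^aD_a\alpha/\alpha+k^2-k_{ab}k^{ab}\ge \tfrac32 k^2$ valid on a convex TTS; and the splitting of $8\pi P_r$ from Eq.~\eqref{radial-pressure-general} into the Maxwell and ordinary-matter parts. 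Combining these gives the pointwise bound
\begin{equation}
  \tfrac{3}{4}k^2 \le \tfrac12{}^{(2)}R+(E_aE_b+B_aB_b)h^{ab}
  -(E_ar^a)^2-(B_ar^a)^2+8\pi P_r^{(m)}-\tfrac{1}{\alpha}\mathcal{D}^2\alpha
\end{equation}
on $\sigma_0$, which is precisely Eq.~\eqref{Contraction-condition-in-terms-of-E-and-B} with the extra term $-\alpha^{-1}\mathcal{D}^2\alpha$.

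Next I would integrate this over $\sigma_0$. The new term integrates to zero since $\int_{\sigma_0}\alpha^{-1}\mathcal{D}^2\alpha\,dA=-\int_{\sigma_0}\alpha^{-2}(\mathcal{D}\alpha)^2\,dA\le 0$, so dropping it only strengthens the bound (in fact one may keep it and note it has the favorable sign). Then, using $P_r^{(m)}<0$, the Gauss--Bonnet theorem $\int_{\sigma_0}{}^{(2)}R\,dA=8\pi$ (topology $S^2$), the Cauchy--Schwarz inequality $\int_{\sigma_0}(E_ar^a)^2dA\ge (4\pi q)^2/A_0$ together with Gauss' law for the electric flux through $\sigma_0$ (valid because $\rho_{\rm e}=0$ outside $\sigma_0$), and the definition \eqref{Phi-minus-0} of $\Phi_0^-$, I arrive at exactly the inequality of Eq.~\eqref{k-ineq-dtts},
\begin{equation}
  \tfrac{3}{4}\int_{\sigma_0}k^2\,dA\le 4\pi-\frac{(4\pi q)^2}{A_0}-8\pi\Phi_0^-.
\end{equation}

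From here the argument is identical to that of Theorem~\ref{thm-DTTS}: I would run the inverse mean curvature flow on the static slice $\Sigma$ with $\sigma_0$ at $y=0$ and a large sphere at $y=\infty$, use ${}^{(3)}R\ge 2(E^aE_a+B^aB_a)$ (which follows from $\rho^{(m)}\ge 0$ via the Hamiltonian constraint of the static slice) to obtain the monotonicity inequality \eqref{monotonic} for Geroch's quasilocal energy, integrate over $0\le y<\infty$ using $A(y)=A_0e^y$, and insert the $k^2$-bound above at the last step to land on Eq.~\eqref{dttsineq}. The topology statement that the static cross section is $S^2$ is needed at the Gauss--Bonnet step and should follow from the asymptotic flatness plus the flow assumption as in the DTTS case.

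The main obstacle I anticipate is not in the chain of inequalities, which is routine once the identities from Ref.~\cite{Yoshino:2017} are granted, but in verifying that the hypotheses genuinely line up: one must check that on a static slice the relevant Hamiltonian constraint still yields ${}^{(3)}R\ge 2(E^aE_a+B^aB_a)$ given only $\rho^{(m)}\ge 0$ (here time-symmetry of the static slice, $K_{ab}=0$, makes this immediate, so the point is simply to confirm that the static cross-section is indeed time-symmetric), and that the convexity hypothesis and the condition $k>0$ at one point are exactly what is needed to make the TTS convexity inequality applicable and to guarantee the slice can be used as the $y=0$ leaf of an inverse mean curvature flow. I would also be careful that $P_r^{(m)}<0$ (strict) versus $\le 0$ does not cause trouble; strictness is only used to absorb the matter term with the correct sign and the weak inequality suffices, so stating it as strict is harmless.
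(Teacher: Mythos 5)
Your proposal follows essentially the same route as the paper's Appendix~A: you combine the ${}^{(2)}R$ identity and the convexity inequality quoted from Ref.~\cite{Yoshino:2017} to reproduce Eq.~\eqref{Contraction-condition-in-terms-of-E-and-B} up to the extra term $-\alpha^{-1}\mathcal{D}^2\alpha$, integrate to recover Eq.~\eqref{k-ineq-dtts}, and then rerun the inverse-mean-curvature-flow argument of Theorem~\ref{thm-DTTS} verbatim, which is exactly what the paper does. One small correction: integration by parts on the closed surface gives $\int_{\sigma_0}\alpha^{-1}\mathcal{D}^2\alpha\,dA=+\int_{\sigma_0}\alpha^{-2}(\mathcal{D}\alpha)^2\,dA\ge 0$ (not $-$, and it is not zero), so it is the term $-\alpha^{-1}\mathcal{D}^2\alpha$ actually appearing in the inequality that contributes non-positively and may be dropped --- your conclusion is the right one, but the displayed identity has its sign flipped.
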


As remarked in the final section, this theorem applies to the case
that the magnetic charge $q_{\rm m}$ is zero.
When $q_{\rm m}$ is nonzero, $\Phi^-_0$ must be replaced by $-\Phi_0$,
where $\Phi_0$ is given in Eq.~\eqref{Def:Phi-0}.

%---------   References   ---------%

\end{document}